\newcommand{\argmin}{\operatornamewithlimits{argmin}}
\newtheorem{lemma}{Lemma}
\theoremstyle{definition}
\theoremstyle{remark}
\newcommand{\R}{{\mathbb R}}
\newcommand{\vvec}[1]{{\bm #1}}
\definecolor{gray}{gray}{0.5}
\title{\vspace{-.8in}Learning fixed points of recurrent neural networks by reparameterizing the network model}
\author{Vicky Zhu\thanks{Babson College, Mathematics, Analytics, Science, and Technology Division, Wellesley, MA, USA}\,\,\footnotemark[2]  \and Robert Rosenbaum\thanks{University of Notre Dame, Department of Applied and Computational Mathematics and Statistics, Notre Dame, IN, USA}}
\begin{document}
\maketitle

\abstract{In computational neuroscience, fixed points of recurrent neural networks are commonly used to model neural responses to static or slowly changing stimuli. These applications raise the question of how to train the weights in a recurrent neural network to minimize a loss function evaluated on fixed points. A natural approach is to use gradient descent on the Euclidean space of  synaptic weights. We show that this approach can lead to poor learning performance due, in part, to singularities that arise in the loss surface. We use a reparameterization of the recurrent network model to derive two alternative learning rules that produce more robust learning dynamics. We show that these learning rules can be interpreted as steepest descent and gradient descent, respectively, under a non-Euclidean metric on the space of recurrent weights. Our results question the common, implicit assumption that learning in the brain should be expected to follow the negative Euclidean gradient of synaptic weights. 
}

\section{Introduction}\label{Intro}

Recurrent neural network models (RNNs) are widely used in machine learning and in computational neuroscience. In machine learning, they are typically used to learn dynamical responses to time series inputs. In computational neuroscience, RNNs are sometimes used to model dynamical responses of neurons to dynamical stimuli~\cite{sussillo2009generating,sussillo2014neural}, but are also often used to model stationary, fixed point neural responses to static inputs. For example, many phenomena observed in visual cortical circuits, \emph{e.g.}, surround suppression, are widely modeled by stationary states of computational models in which recurrent connections model lateral, intralaminar connectivity~\cite{ferster2000neural,ozeki2009inhibitory,rubin2015stabilized,ebsch2018imbalanced,curto2019fixed,baker2020nonlinear}.


A natural approach to learning fixed points of RNNs is to use direct gradient descent on the recurrent weight matrix after the network has converged toward a fixed point. 
A direct application of this approach, called ``truncated backpropagation through time,'' can be computationally expensive because it requires the application of backpropagation on a computational graph unrolled over many time steps. Moreover, backpropagation through time is difficult to implement or approximate with biologically plausible models of learning~\cite{lillicrap2019backpropagation}. 

Alternative approached use the implicit equation for fixed points to derive the exact gradients of the loss with respect to the weight matrix at the fixed point, or some approximations to this quantity~\cite{pineda1987generalization,almeida1990learning,williams1990efficient,ollivier2015training,liao2018reviving}. These approaches can also be computationally expensive and difficult to implement in biologically plausible models because the gradient derived from the implicit equation involves matrix inverses, which either need to be computed directly or approximated using, for example, iterative methods. 
In this work, we additionally show that gradient descent on the recurrent weight matrix can lead to poor learning performance because the associated loss landscape has singularities and implicit biases that make it poorly conditioned for gradient-based learning. 

When mentioning ``gradient descent'' above, we were implicitly referring to the Euclidean gradient on weights, which is standard practice. However, several authors have argued that the default use of the Euclidean gradient in gradient descent is not necessarily optimal for studying artificial or biological learning. In machine learning applications, non-Euclidean gradients informed by information theory, such as the natural gradient, are superior in some settings~\cite{amari1998natural,amari1998natural2,martens2020new}. In computational neuroscience, the use of a Euclidean gradient implicitly assumes a specific choice of units in a biological model and, more generally, assumes a specific parameterization of the model~\cite{surace2020choice,kreutzer2022natural,pogodin2023synaptic}. Different units or different parameterizations of a biological model will yield different gradients and ultimately different learning dynamics. Hence, gradient descent using the Euclidean gradient of the loss with respect to synaptic weights under a specific choice of parameterization  might not capture learning dynamics or learned representations in biological neuronal networks.

In this work, we derive two new learning rules for fixed points of recurrent neural networks by reparameterizing the network model. The first learning rule can be viewed as steepest descent with respect to a non-Euclidean metric. The second rule approximates the first one, but it is more efficient and it can be interpreted as gradient descent with a non-Euclidean gradient. We demonstrate empirically that these learning rules exhibit more robust and efficient learning dynamics than standard, Euclidean gradient descent. We also find that the parameter updates produced by these rules point in substantially different directions in parameter space than the negative Euclidean gradient. In addition to providing new, robust learning rules for learning fixed points in recurrent networks, our results question the common, implicit assumption in computational neuroscience that learning should follow the negative Euclidean gradient of synaptic weights. 

Code to apply the proposed learning rules  and produce all figures in the manuscript can be found at \\\texttt{https://github.com/RobertRosenbaum/LearningFixedPointsInRNNs}


\section{Background and theory}\label{Background}

\subsection{Model description}\label{S:model}

We consider a recurrent neural network (RNN) model of the form~\cite{dayan2001theoretical, gerstner2014neuronal,sussillo2009generating, sussillo2014neural}
\begin{equation}\label{E:drdt}
\tau\frac{d\vvec{r}}{dt}=-\vvec{r}+f(W\vvec{r}+\vvec{x})
\end{equation}
where $\vvec{r}(t)\in\R^N$ is a vector of model firing rates, $\tau>0$ is a time constant, $W\in\R^{N\times N}$ is a recurrent connectivity matrix, $\vvec{x}\in\R^N$ models external input to the network, and  $f:\R\to\R$ is a non-negative, non-decreasing activation function or ``f-I curve'', which is applied pointwise. For a time-constant input, $\vvec{x}(t)=\vvec{x}$, fixed point firing rates satisfy 
\begin{equation}\label{E:FixedPt}
\vvec{r}=f(W\vvec{r}+\vvec{x}). 
\end{equation}
The stability of fixed point firing rates from Eq.~\eqref{E:drdt} is determined by the eigenvalues of the Jacobian matrix,
\begin{equation}\label{E:Jacobian}
\mathcal J =  \frac{1}{\tau}[-I+GW]
\end{equation}
where $G=\textrm{diag}(f'(\vvec{z}))$ is a diagonal matrix with entries 
\[
G_{jj}=f'(\vvec{z}_j) 
\]
and $\vvec{z}=[W\vvec{r}+\vvec{x}]$ is the vector of neural inputs or pre-activations evaluated at their fixed points. Specifically, a fixed point is hyperbolically stable if all eigenvalues of $\mathcal J$ have negative real part. 

We can alternatively consider an recurrent neural network model in discrete time of the form
\begin{equation}\label{E:rn}
\vvec{r}(n+1)=f(W\vvec{r}(n)+\vvec{x}(n)).
\end{equation}
Eq.~\eqref{E:drdt} is more common in computational neuroscience while Eq.~\eqref{E:rn} is more common in machine learning, but they are closely related. 
Eq.~\eqref{E:rn} has the same fixed points as Eq.~\eqref{E:FixedPt}, but hyperbolic stability is obtained when eigenvalues of $GW$ have magnitude less than $1$. Hence, if a fixed point is stable for Eq.~\eqref{E:rn}, it is also stable for Eq.~\eqref{E:drdt}, but the converse is not true. 
In this work, we focus on the continuous system in Eq.~\eqref{E:drdt}, but  our approach and learning rules can also be  applied to the discrete system in Eq.~\eqref{E:rn}.


In machine learning applications, RNNs are often used to learn mappings from input time series, $\vvec{x}(t)$, to output time series, $\vvec{r}(t)$, and they are often trained using backpropagation through time. In computational neuroscience, RNNs of the form in Eq.~\eqref{E:drdt} are often studied for their fixed point properties, for example to study orientation selectivity and surround suppression among other phenomena~\cite{ferster2000neural,ozeki2009inhibitory,rubin2015stabilized,ebsch2018imbalanced,curto2019fixed,baker2020nonlinear}, but the weights in these studies are often chosen by hand, not learned. 
As a combination of these perspectives, we are interested in {\it learning} mappings from static inputs, $\vvec{x}(t)=\vvec{x}$, to their associated fixed points, $\vvec{r}$, given by Eq.~\eqref{E:FixedPt}. 

Specifically, consider a supervised learning task with a cost function of the form
\[
J(W)=\frac{1}{m}\sum_{i=1}^m L(\vvec{r}^i,\vvec{y}^i)
\]
where $\vvec{x}^i$ is an input, $\vvec{y}^i$ is a label, $L$ is a loss function, and 
$
\vvec{r}^i=f(W{\vvec r}^i+\vvec{x}^i)
$
is the fixed point that the network converges to under input $\vvec{x}^i$.
This learning task presents unique challenges because fixed points are defined {\it implicitly} by Eq.~\eqref{E:FixedPt} instead of explicitly as a function of $\vvec{x}^i$, and also because we only wish to learn {\it stable} fixed points.  
The data set $\{(\vvec{x}^i,\vvec{y}^i)\}_i$ can be the entire data set in the case of full-batch learning, or a mini-batch in the case of stochastic learning.
Updates to $W$ during learning can be written as 
\[
W\gets W+\Delta W
\]
where
\[
\Delta W= \frac{1}{m}\sum_{i=1}^m \Delta W^i
\]
Here, $\Delta W^i$ is an update rule that can depend on $\vvec{x}^i$, $\vvec{y}^i$, $W^i$, and $\vvec{r}^i$. 
Below, we derive and compare three different update rules, $\Delta W^i_1$, $\Delta W^i_2$, and $\Delta W^i_3$, for minimizing $J$.


\subsection{Gradient descent on the recurrent weight matrix.}\label{S:dW1}


The first learning rule we consider is direct gradient descent of the loss surface with respect to $W$ using the Euclidean gradient,
\begin{equation}\label{E:DeltaW1a}
\begin{aligned}
\Delta W^i_1=-\eta_W\nabla_W L(\vvec{r}^i,\vvec{y}^i)
\end{aligned}
\end{equation}
where $\eta_W>0$ is a learning rate and $\nabla_W$ refers to the standard, Euclidean gradient with respect to $W$. 
If the fixed point, $\vvec{r}^i$, is hyperbolically stable, then the Jacobian matrix from Eq.~\eqref{E:Jacobian} has eigenvalues with negative real part, so  $I-G^iW=-\tau \mathcal J$ is invertible and we have (see Appendix~\ref{A:GradDesW1})

\begin{equation}\label{E:DeltaW1b}
\begin{aligned}
\Delta W_1^i
= -\eta_WG^i\left[I-G^iW\right]^{-T}\left(\nabla_{\vvec{r}^i}L\right)\left(\vvec{r}^i\right)^{T}.
\end{aligned}
\end{equation}
where  $G^i=\textrm{diag}(f'(\vvec{z}^i))$ evaluated at the fixed point  and $U^{-T}$ denotes the inverse transpose of a  matrix, $U$. 
If $G^i_{jj} \neq 0$ for all $j$, then $G^i$ is  invertible so Eq.~\eqref{E:DeltaW1b} can be simplified to get
\begin{equation}\label{E:DeltaW1c}
\begin{aligned}
\Delta W_1^i
=  -\eta_W
\left[\left[G^i\right]^{-1}-W\right]^{-T}\left(\nabla_{\vvec{r}^i}L\right)\left(\vvec{r}^i\right)^T
\end{aligned}
\end{equation}

Evaluating Eqs.~\eqref{E:DeltaW1b} and \eqref{E:DeltaW1c} directly is computationally expensive because they require the calculation of a matrix inverse. Truncated backpropagation through time and other methods provide alternative approaches to approximating $\Delta W_1$~\cite{pineda1987generalization,almeida1990learning,williams1990efficient,ollivier2015training,liao2018reviving}, but note that truncated backpropagation through time requires the storage of a large computational graph, making it memory inefficient. Moreover, we show in examples below that using $\Delta W_1$ to update weights can lead to poor learning performance. We next propose an alternative update rule based on a nonlinear reparameterization of the model.

\subsection{A new learning rule from reparameterizing the RNN}\label{S:dW2}

To motivate the reparameterized model,  first consider the special case of a linear network defined by
\[
f(z)=z
\]
In this case, $G=I$ is the identity matrix and 
Eq.~\eqref{E:FixedPt} for the fixed point can be written as
\[
{\bf r}=[I-W]^{-1}{\bf x}.
\]
This is a linear model in the sense that $\bf r$ is a linear function of ${\bf x}$, but the nonlinear dependence of the cost on $W$ (especially a nonlinearity involving matrix inverses) produces complicated and computationally expensive update from Eq.~\eqref{E:DeltaW1b}.

Instead of performing gradient descent with respect to $W$, we propose instead to first apply a nonlinear change of coordinates to obtain new parameters,
\begin{equation}\label{E:AdefGeqI}
A=F(W):=[I-W]^{-1}.
\end{equation}
If we parameterize the model in terms of $A$ instead of $W$, then fixed points satisfy the standard linear model
\begin{equation}\label{E:StandardLinearModel}
{\bf r}=A{\bf x}
\end{equation}
which is linear in the input, $\bf x$, {\it and} the parameters, $A$. 
Gradient descent of the loss with respect to $A$ gives the standard update rule for a linear, single-layer neural network
\begin{equation}\label{E:DeltaAa}
\begin{aligned}
\Delta A^i &= -\eta_A\nabla_A L(\vvec{r}^i,\vvec{y}^i)\\
&=-\eta_A \left(\nabla_{\vvec{r}^i}L\right)\left(\vvec{x}^i\right)^T\\
&=-\eta_A \left(\nabla_{\vvec{r}^i}L\right)\left(\vvec{r}^i\right)^TA^{-T}
\end{aligned}
\end{equation}
 where we distinguish between the learning rate, $\eta_A$, used for the reparameterized model and the learning rate, $\eta_W$, used for the original parameterization.
Eq.~\eqref{E:DeltaAa} gives a gradient-based update to the new parameter, $A$, but our original RNN model is parameterized by $W$. To update our original parameters, we need to change the $\Delta A$ from Eq.~\eqref{E:DeltaAa} back to $W$ coordinates. To do this, note that we want to find a value for $\Delta W$ that satisfies $A+\Delta A=F(W+\Delta W)$ whenever $A=F(W)$ and $\Delta A$ comes from Eq.~\eqref{E:DeltaAa}. In other words, the update to $W$ is given by 
\begin{equation}\label{E:DeltaW2a}
\begin{aligned}
\Delta W_2^i & = F^{-1}(F(W)+\Delta A^i))-W\\
&=-\left[\left[I-W\right]^{-1}-\eta_A \left(\nabla_{\vvec{r}^i}L\right)\left(\vvec{r}^i\right)^T[I-W]^T\right]^{-1}+I-W
\end{aligned}
\end{equation}
where $F^{-1}(A)=I-A^{-1}$ is the inverse of $F(W)$.

To summarize this approach, if Eq.~\eqref{E:DeltaW2a} is used to update parameters, $W$, under the linear fixed point model, ${\bf r}=f(W{\bf r}+{\bf x})$ with $f(z)=z$, then the learning dynamics will be identical to standard linear regression of parameters, $A$, on the model ${\bf r}=A{\bf x}$.

Since gradient descent with respect to $A$ in Eq.~\eqref{E:DeltaAa} represents steepest descent of the loss surface in the new parameter space of $A$ and since Eq.~\eqref{E:DeltaW2a} gives the same updates in the original parameter space of $W$, 
the learning rule in Eq.~\eqref{E:DeltaW1a} corresponds to steepest descent of the cost, $J(W)$,  using a non-Euclidean metric defined by 
\begin{equation}\label{E:ddef}
d(W_1,W_2)=\|F(W_1)-F(W_2)\|
\end{equation}
where $\|B\|=\sqrt{\textrm{Tr}(BB^T)}$ is the  Euclidean or Frobenius norm on matrices. Note that $d(\cdot,\cdot)$ is a metric when restricted to the space of all matrices, $W$, for which $I-W$ is invertible.  
Hence, if we restrict to $W$ that yield hyperbolically stable fixed points, Eq.~\eqref{E:DeltaW1a} corresponds to steepest descent with respect to a non-Euclidean metric. However, the metric $d$ is not necessarily generated by an inner product, so Eq.~\eqref{E:DeltaW2a} cannot be called {\it gradient} descent since the notion of a gradient   requires a metric induced by an inner product. In Section~\ref{S:dW3}, we show that an approximation to $\Delta W_2^i$ produces gradient descent with a non-Euclidean gradient. Moreover, in Section~\ref{S:Experiments}, we present examples showing that $\Delta W_2$ is better suited to learning fixed points than the standard approach to gradient descent represented by $\Delta W_1$. But first, we need to generalize the derivation of $\Delta W_2$ to arbitrary activation functions.


Eq.~\eqref{E:DeltaW2a} was derived for the specific case $f(z)=z$, but we can extend it to a model with arbitrary $f(z)$. 
To do so, we first linearize Eq.~\eqref{E:FixedPt} to obtain a linearized fixed point equation,
\begin{equation}\label{E:LinFP}
\vvec{r}=G[W\vvec{r}+\vvec{x}]
\end{equation}
which has a closed form solution given by 
\begin{equation}\label{E:rAx}
{\bf r}=[I-GW]^{-1}G{\bf x}.
\end{equation}
Note, again, that $I-GW$ is invertible whenever $\vvec{r}$ is a hyperbolically stable fixed point. 

Given Eq.~\eqref{E:rAx}, a natural choice of new parameters would be 
\begin{equation}\label{E:AdefGX}
A=[I-GW]^{-1}G,
\end{equation}
because it would again produce a (linearized) model of the form $\vvec{r}=A\vvec{x}$. Note that under the  linear model $f(z)=z$, we have $G=I$, and recover the parameterization in Eq.~\eqref{E:AdefGeqI}, so Eq.~\eqref{E:AdefGX} is a generalization of Eq.~\eqref{E:AdefGeqI}.  However, the update rule to $W$ derived from gradient descent on $A$ from the parameterization in Eq.~\eqref{E:AdefGX} is susceptible to blowup or singularities when some values of $G_{jj}=f'(\vvec{z}_j)$ become small in magnitude or zero. 
To see why this is the case, suppose $G_{jj}=\mathcal O(\epsilon)$ is small for some $j$ and consider an update to $W$ of the form $W=W+\Delta W$. Then the resulting update to ${\bf r}_j$ is, to linear order in $\epsilon$,
\[
\begin{aligned}
\Delta {\bf r}_j&=\sum_{k}G_{jj}\Delta W_{jk}r_k\\
&=\mathcal O(\epsilon \Delta W).
\end{aligned}
\]
On the other hand, an update of the form $A=A+\Delta A$ gives 
\[
\begin{aligned}
\Delta {\bf r}_j&=\sum_k \Delta A_{jk}{\bf r}_k\\
&=\mathcal O(\Delta A).
\end{aligned}
\]
Hence, if we want $\Delta W$ to produce the same change, $\Delta {\bf r}$, produced by $\Delta A$, then we must have $\Delta W\sim \mathcal O(\Delta A/\epsilon)$. This will cause large changes to $W$ in response to inputs for which $G$ has small elements at the fixed point, ultimately undercutting the model's performance (see Appendix~\ref{A:oldA} for more details).  In the extreme case that $G_{jj}=0$ for some $j$, updates to $W$ do not impact ${\bf r}$ ({\it i.e.}, $\Delta {\bf r}_j=0$ for any $\Delta W$ under the linear approximation ${\bf r}=G[W{\bf r}+\bf x]$), so we cannot derive a $\Delta W$ to match a given $\Delta A$, {\it i.e.}, the reparameterization in Eq.~\eqref{E:AdefGX} is ill-posed. 

To circumvent these problems, we instead take the parameterization
\begin{equation}\label{E:Adef}
A=F(W) :=[G-GWG]^{-1}
\end{equation}
in place of Eq.~\eqref{E:AdefGX}. 
Under the linearized fixed point equation in Eq.~\eqref{E:LinFP}, we then obtain the linear model
\begin{equation*}
\begin{aligned}
\bf{r}
& = GAG\bf{x}
\end{aligned}
\end{equation*}
which generalizes Eq.~\eqref{E:StandardLinearModel}. 
This equation is linear in $\vvec{x}$ and in the new parameters, $A$. Hence, learning  $A$ is again a linear regression problem, albeit with the extra $G$ terms. These extra $G$ terms prevent singularities and blowup when $G_{jj}$ terms become small or zero because $\Delta r_j=\mathcal O(\epsilon \Delta A)$ is small whenever we make an update of the form $A=A+\Delta A$ with $G_{jj}=\mathcal O(\epsilon)$ small. 
Under the simple linear model $f(z)=z$, we have $G=I$, and recover the parameterization in Eq.~\eqref{E:AdefGeqI}, so that  Eq.~\eqref{E:Adef} (like Eq.~\eqref{E:AdefGX}) is a generalization of Eq.~\eqref{E:AdefGeqI}.

Note that each input ({\it i.e.}, each $i$) will potentially have a different gain matrix, $G^i=\textrm{diag}(f'({\bf z^i}))$, so each sample will have a potentially different value of $A^{i}=[G^i-G^iWG^i]^{-1}$ as well. 
The gradient-based update of the loss, $L(\vvec{r}^i,\vvec{y}^i)$, with respect to $A^{i}$ for each sample becomes
\begin{equation*}
\begin{aligned}
\Delta A^{i} &= -\eta_A\nabla_{A^{i}} L(\vvec{r}^i,\vvec{y}^i)\\
&=-\eta_A G^{i}\left(\nabla_{\vvec{r}^i}L\right)(\vvec{r}^i)^T[G^{i}]^{-1}A^{-T}\\
\end{aligned}
\end{equation*}
Using the same approach used to derive Eq.~\eqref{E:DeltaW2a} above, we can again derive an update to $W$ given by
\begin{equation}\label{E:DeltaW2b}
\begin{aligned}
\Delta W_2^i & = F^{-1}(F(W)+\Delta A^i)-W\\
&= -\left[\left[I-G^iW\right]^{-1}G^i-\eta_A\left[G^i\right]^2\left(\nabla_{\vvec{r}^i}L\right)(\vvec{r}^i)^T\left[I-G^iW\right]^TG^i\right]^{-1}\\
&\;\;\;\;\;+\left[[G^i]^{-1}-W\right].
\end{aligned}
\end{equation}
This update can only be evaluated directly in the situation where $G_{jj}^i \neq 0$ for all $j$ so that the inverse of the gain matrix, $G$, exists. 
However, note that  $[W_2^i]_{jk}\to 0$ as $G^i_{jj}\to 0$, as expected, so in situations where $G_{jj}^i=0$, it is consistent to take $[W_2^i]_{jk}=0$. 
Note also that Eq.~\eqref{E:DeltaW2b} is equivalent to Eq.~\eqref{E:DeltaW2a} whenever $G=I$, as expected, since Eq.~\eqref{E:DeltaW2b} generalizes Eq.~\eqref{E:DeltaW2a} to the case of arbitrary $f$.

\subsection{A simpler learning rule from linearizing the reparameterized rule}\label{S:dW3}

The reparameterized rule in Eq.~\eqref{E:DeltaW2b} is rather a complicated learning rule, and the matrix inverses can be computationally expensive to compute or approximate.
If we assume that $\eta_A>0$ is small, then we can approximate Eq.~\eqref{E:DeltaW2b} by applying Taylor expansion to linear
order in $\eta_A$. This gives the linearized parameterized rule (see Appendix~\ref{A:LinearApprox} for details),
\begin{equation}\label{E:DeltaW3}
\begin{aligned}
\Delta W_3^i 
& = -\eta_A\left[I-W G^i\right]G^i\left(\nabla_{\vvec{r}^i}L\right)(\vvec{r}^i)^T\left[I - G^i W\right]^T[I-G^iW]\\
\end{aligned}
\end{equation}
In contrast to Eqs.~\eqref{E:DeltaW1b} and \eqref{E:DeltaW2b} for $\Delta W_1^i$ and $\Delta W_2^i$, Eq.~\eqref{E:DeltaW3} for $\Delta W_3^i$ does not require the computation of matrix inverses. Like $\Delta W_1^i$ and $\Delta W_2^i$, $\Delta W_3^i$ satisfies $\Delta W_{jk}\to 0$ whenever $G_{jj}\to 0$,  but unlike Eq.~\eqref{E:DeltaW2b} for $\Delta W_2^i$, Eq.~\eqref{E:DeltaW3} for $\Delta W_3^i$ can be evaluated directly when $G_{jj}=0$ for some $j$.

Notably, $\Delta W_3^i$ can be interpreted as gradient descent of the loss function with a non-Euclidean gradient. To see why this is the case, first note that $\Delta W^i_3$ is related to $\Delta W^i_1$ according to  
\begin{equation}\label{E:AtoEmetric}
\begin{aligned}
\Delta W^i_3 = B^i \Delta W^i_1 C^i,
\end{aligned}
\end{equation}
where 
\[
B^i =[I-WG^i][I-WG^i]^T
\]
and 
\[
C^i = [I-G^iW]^T[I-G^iW].
\]
Here and for the remainder of this section, we take $\eta_A=\eta_W=\eta$ to highlight the relationship between the two update rules, but constant scalar coefficients do not affect these results. 

Using Eq.~\eqref{E:AtoEmetric}, we may conclude that $\Delta W_3^i$ is equivalent to gradient descent of the loss with respect to $W$ using a non-Euclidean gradient. 
To explain this statement in more detail, note that the gradient of $L({\bf r}^i,{\bf y}^i)$ with respect to $W$ depends on the choice of metric or geometry~\cite{surace2020choice}. Given an inner product, $\langle \cdot,\cdot\rangle_a$, on $\R^{N\times N}$ the gradient of a function, $F:\R^{N\times N}\to \R$, on the geometry imposed by $\langle \cdot,\cdot\rangle_a$ evaluated at $W\in \R^{N\times N}$ is the unique matrix $\nabla^a_W F\in \R^{N\times N}$ such that for every $U\in \R^{N\times N}$~\cite{spivak2018calculus,surace2020choice}, 
\[
\langle \nabla^a_W F, U\rangle_a=\lim_{\epsilon\to 0}\frac{F(W+\epsilon U)-F(W)}{\epsilon}.
\]
The standard Euclidean gradient, $\nabla=\nabla^E$, on matrices is given by taking the geometry produced by the Euclidean or Frobenius inner product, 
\[
\langle U,V\rangle_E=\sum_{jk}U_{jk}V_{jk}=\textrm{Tr}(UV^T).
\]
Recall that $\Delta W_1^i$ is defined by the Euclidean gradient,
\[
\Delta W_1^i=-\eta \nabla_{W}^E L({\bf r}^i,{\bf y}^i)
\]
where $L({\bf r^i},{\bf y}^i)$ is interpreted as a function of $W$. We claim that 
\begin{equation}\label{E:DW3nablaB}
\Delta W_3^i=-\eta  \nabla_{W}^{B} L({\bf r}^i,{\bf y}^i)
\end{equation}
where $\nabla_{W}^B$ is the gradient under the geometry defined by the inner product,
\[
\begin{aligned}
\langle U,V\rangle_B&=\textrm{Tr}(B^{-1}UC^{-1}V^T)\\
&=\langle B^{-1}U,VC^{-1}\rangle_E.
\end{aligned}
\]
Note that we can use the cyclic property of the trace operator to write
\[
\begin{aligned}
\langle U,V\rangle_B&=\textrm{Tr}(B^{-1}UC^{-1}V^T)\\
&=\textrm{Tr}\left([I-WG]^{-T}[I-WG]^{-1}U[I-GW]^{-1}[I-GW]^{-T}V^T\right)\\
&=\textrm{Tr}\left([I-WG]^{-1}U[I-GW]^{-1}[I-GW]^{-T}V^T[I-WG]^{-T}\right)\\
&=\langle \mathcal LU,\mathcal L V\rangle_E
\end{aligned}
\]
where $\mathcal L:\R^{N\times N}\to \R^{N\times N}$ is a linear operator on $N\times N$ matrices defined by
\[
\mathcal L(U)=[I-WG]^{-1}U[I-GW]^{-1}.
\]
Hence, $\langle \cdot,\cdot\rangle_B$ can be viewed as a Euclidean inner product on linearly transformed coordinates. This confirms that $\langle \cdot,\cdot\rangle_B$ defines an inner product on square matrices whenever $[I-WG]$ and $[I-GW]$ are non-singular. 
For notational convenience here and below,  we do not write the explicit dependence of $B$, $C$, or $\mathcal L$ on $i$, but  they {\it do} depend on $i$ through $G^i$. In other words, there are distinct matrices, $B$ and $C$, and therefore distinct inner products, $\langle\cdot,\cdot\rangle_B$, at each gradient descent iteration.
Given Eq.~\eqref{E:AtoEmetric}, we can prove Eq.~\eqref{E:DW3nablaB} by showing that
\begin{equation}\label{E:nalbaBE}
\nabla^{B}_{W} L=B\left[ \nabla^{E}_{W} L\right]C.
\end{equation}
To show Eq.~\eqref{E:nalbaBE}, first define the $N\times N$ standard basis matrices ${\vvec 1}^{jk}\in \R^{N\times N}$ entrywise by
\[
{\vvec 1}^{jk}_{j'k'}=
\begin{cases}
1 & j=j'\textrm{ and }k=k'\\
0 & \textrm{otherwise}
\end{cases}.
\]
for $j,k=1,\ldots,N$.
Now compute the inner product of the gradient with ${\vvec 1}^{jk}$,
\begin{equation}\label{E:EB1}
\begin{aligned}
\left\langle\left[\nabla^{B} L\right],{\vvec 1}^{jk}\right\rangle_{B}&=\left\langle B^{-1}\left[\nabla^{B} L\right], {\vvec 1}^{jk}C^{-1}\right\rangle_E\\
&=\textrm{Tr}\left(B^{-1}\left[\nabla^{B}L\right]C^{-1}\left[{\vvec 1}^{jk}\right]^T\right)\\
&=\sum_{n=1}^N \left[B^{-1}\left[\nabla^{B} L\right]C^{-1}{\bf 1}^{kj}\right]_{n,n}\\
&=\sum_{n,m=1}^N \left[B^{-1}\left[\nabla^{B} L\right]C^{-1}\right]_{n,m} [{\bf 1}^{kj}]_{m,n}\\
&=\left[B^{-1}\left[\nabla^{B} L\right]C^{-1}\right]_{jk}
\end{aligned}
\end{equation}
where the last line follows from the fact that ${\bf 1}^{kj}_{n,m}=1$ when $n=k$ and $m=j$, and it is equal to zero for all other $j,k$. 
But we also have, from the definition of a gradient,
\begin{equation}\label{E:EB2}
\left\langle\left[\nabla^{B} L\right],{\vvec 1}^{jk}\right\rangle_{B}=\lim_{\epsilon\to 0}\frac{J(W+\epsilon {\vvec 1}^{jk})-J(W)}{\epsilon}=\frac{\partial J}{\partial W_{jk}}=\left[\nabla^{E} L\right]_{jk}.
\end{equation}
Since Eqs.~\eqref{E:EB1} and \eqref{E:EB2} apply for all indices $j,k$, we may conclude that 
\[
B^{-1}\left[\nabla^{B} L\right]C^{-1}=\left[\nabla^{E} L\right]
\]
and therefore that
\[
\left[\nabla^{B} L\right]=B\left[\nabla^{E} L\right]C,
\]
which concludes our proof.

In summary, if $W$ is updated according to $\Delta W^i_3$ from Eq.~\eqref{E:DeltaW3}, then this is equivalent to performing gradient descent on the loss with respect to the weight matrix under the geometry defined by the new inner product, $\langle U,V\rangle_{B}$.  Below, we present examples showing that this geometry is better suited to learning $W$ than gradient descent with respect to the standard Euclidean geometry. Specifically, $\Delta W_3^i$ learns more robustly than $\Delta W_1^i$.

\section{Experiments and results}\label{S:Experiments}

We next evaluate and interpret each of the learning rules derived above on two different learning tasks.


\subsection{Learning fixed points in a linear model.}\label{S:linearND}


For demonstrative purposes, we first consider an example of linear regression with mean-squared loss. 
Specifically, we consider $f(z)=z$ with
\[
L(\vvec r,\vvec y)=\|\vvec r-\vvec y\|^2.
\]
where $\|\cdot\|$ is the Euclidean norm on $\R^N$. 
Note that $G=I$ is the identity in this case. 
We define the $N\times m$ matrices, $X=\left[{\bf x}^1\;\; {\bf x}^2\;\ldots {\bf x}^m\right]$, $Y=\left[{\bf y}^1\;\; {\bf y}^2\;\ldots {\bf y}^m\right]$, and $R=\left[{\bf r}^1\;\; {\bf r}^2\;\ldots {\bf r}^m\right]=[I-W]^{-1}X$. 
The cost function can  be written as
\begin{equation}\label{E:Wcost}
J(W)=\frac{1}{m}\left\|\,[I-W]^{-1}X-Y\,\right\|^2.
\end{equation}
It is useful to also write the cost in terms of the parameters $A=[I-W]^{-1}$ to get a standard quadratic cost function,
\begin{equation}\label{E:Acost}
J_A(A)=\frac{1}{m}\left\|AX-Y\right\|^2.
\end{equation}
For this problem, minimizers of $J(W)$ and $J_A(A)$ can be found explicitly. Before continuing to empirical examples, we derive and discuss these explicit minimizers.

\subsubsection{Computing explicit minimizers in a linear model.}

In the under-parameterized case ($N\le m$ when all matrices full rank), $J_A(A)$ has a unique minimizer defined by 
\[
A^*=YX^+
\]
where $X^+=X^T(XX^T)^{-1}$ is the Moore-Penrose pseudo-inverse of $X$ when $N\le m$. Therefore, $J(W)$ has a unique minimizer at
\[
W^*=I-[A^*]^{-1}=XX^T(YX^T)^{-1}
\]
under the assumption that $A^*$ is invertible.

The over-parameterized case ($N>m$ when matrices are full rank) is more relevant and interesting. In this case, there are infinitely many choices of $W$ and $A$ for which $J(W)=0$ and $J_A(A)=0$. 
The problem of choosing a solution to $J_A(A)=0$ is a standard least squares problem and a common approach is to take 
\[
A^*=YX^+
\] 
where $X^+=(X^TX)^{-1}X^T$ is the Moore-Penrose pseudo-inverse of $X$ when $N>m$. It is tempting to use this value of $A^*$ and then take $W^*=I-\left[A^*\right]^{-1}$. However, note that $A^*$ is the solution to $AX=Y$ that minimizes the Frobenius norm of $A$, {\it i.e.},
\[
A^*=\argmin_A \|A\| \;\textrm{ s.t. }\; AX=Y.
\]
Therefore, $W^*=I-\left[A^*\right]^{-1}$  represents a solution, $W$, that minimizes the norm of $A=[I-W]^{-1}$. Since the Jacobian matrix is given by $\mathcal J=(-I+W)/\tau=-A^{-1}/\tau$, stability is promoted by $W$ having a small spectral radius (all eigenvalues of $W$ must have real part less than $1$ for stability). Hence, $W^*=I-\left[A^*\right]^{-1}$ is a poor choice for $W^*$. Minimizing the Frobenius norm of $A$ will tend to push the eigenvalues of $A$ toward zero, which can lead to large eigenvalues of $W=I-A^{-1}$ and $\mathcal J=-A^{-1}/\tau$, promoting unstable fixed points. 
Instead, to find a good optimizer, $W^*$, we can find solutions that minimize the norm of $W$ instead of $A$. To this end, we can solve
\[
W^*=\argmin_W \|W\| \;\textrm{ s.t. }\; [I-W]^{-1}X=Y.
\]
To solve this problem, we re-write it in a more standard form
\[
W^*=\argmin_W \|W\| \;\textrm{ s.t. }\; WY=Y-X.
\]
This problem has the solution 
\begin{equation}\label{E:Wstar}
W^*=[Y-X]Y^+
\end{equation}
where $Y^+=(Y^TY)^{-1}Y^T$ is the Moore-Penrose pseudo-inverse of $Y$ when $N>m$. This  is the solution with minimal Frobenius norm on $W$ and is therefore more likely than $I-\left[A^*\right]^{-1}$ to have a small spectral radius and therefore more likely to give stable fixed points. Hence, Eq.~\eqref{E:Wstar} provides a good optimizer in the over-parameterized case ($N>m$).


\subsubsection{Visualizing the loss landscape of a linear model.}

For empirical examples, we first generated inputs, ${\bf x}^i$, independently from a Gaussian distribution and generated targets ${\bf y}^i$ using a ground truth weight matrix, $\hat W$, and adding noise. Specifically, we define
\begin{equation}\label{E:setupNd}
\begin{aligned}
X & \sim \sigma_x  Z_{N\times m}\\
Y & \sim \left[I - \hat W\right]^{-1}X+\sigma_y Z_{N\times m}
\end{aligned}
\end{equation}
where $\sigma_x = 0.1$ controls the magnitude of the inputs, $\sigma_y = 0.01 $, and each $Z_{N\times m}$ represents an $N\times m$ matrix of independent, standard, Gaussian random variables. The ground truth weight matrix is generated by
\begin{equation*}
\begin{aligned}
\hat W & \sim  \frac{\sigma_w}{\sqrt{N}}Z_{N\times N}.\\
\end{aligned}
\end{equation*}
Following Girko's circular law, the eigenvalues of $\hat W$ lie approximately within a circle of radius $\sigma_w$ with high probability~\cite{girko1985circular}. Hence, we take  $\sigma_w=0.5 <1 $ to control the spectral radius of the circle to be less than $1$, so that all eigenvalues of the Jacobian matrix, $\mathcal J = (-I+\hat W)/\tau$, are negative and fixed point firing rates are stable under the ground truth parameters, $\hat W$.



The cost landscape, $J(W)$, cannot easily be visualized as a function of $W$ for $N>1$ because $W$ has $N^2$ dimensions, so even $N=2$ would be difficult to visualize. To help visualize $J(W)$, we first plotted it on a random line segment passing through $W^*$ in $\R^{N\times N}$. Specifically, we defined the parameterized function
\begin{equation}\label{E:Wt}
W(t)=W^*+\frac{ct}{\sqrt N}Z_{N\times N}
\end{equation}
where $c=2.5$ scales the maximum magnitude of the perturbation and $t$ was varied from $-1$ to $1$ to create the visualization of $J(W(t))$ (Figure~\ref{LossWAnD}A).  
This corresponds to plotting $J(W)$ along a one-dimensional slice of the space $\R^{N\times N}$ on which $W$ lives. Note that the true minimizer, $W=W^*$, is sampled when $t=0$.  The values of $W$ sampled by the process can produce stable or unstable fixed points. Making the approximation $W^*\approx \hat W$, we have that $\rho(W(t))\approx \sqrt{\sigma_w^2+c^2t^2}$ and therefore an approximate stability condition is given by $|t|<\sqrt{1-\sigma_w^2}/{c}\approx 0.346$.

Figure~\ref{LossWAnD}A shows the resulting cost curve for five random values of $Z$ with the blue dashed lines marking the approximate stability boundary. The cost is relatively well behaved within the boundary,  but poorly conditioned outside of the boundary because of the singularities produced by the matrix inverses in Eq.~\eqref{E:Wcost}. Specifically, outside of the stability region, the spectral radius of $W$ is larger than $1$ so some eigenvalues are near $1$ in magnitude. As a result, the $[I-W]^{-1}$ in Eq.~\eqref{E:Wcost} can lead to very large values of $J(W)$.

 \begin{figure}
 \centering{
 \includegraphics[width=6in]{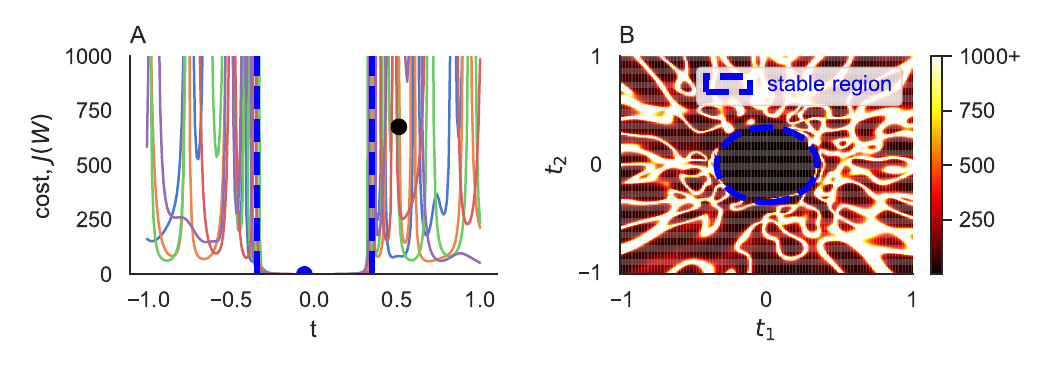}
 }
 \caption{{\bf Visualizing the cost landscape for a linear model.} {\bf A)} The cost function $J(W(t))$ as a function of $t$ from Eq.~\eqref{E:Wt}. This represents the cost evaluated along five random line segments in $\R^{N\times N}$, each passing through $W^*$ at $t=0$.  Two blue dash lines show the stability boundary, $|t|=0.346$.  The vertical axis is cutoff at $J=1000$ to better visualize the curves. Blue and black circles show stable and unstable initial conditions used  for learning. 
 {\bf B)} The cost function $J(W(t_1,t_2))$ from Eq.~\eqref{E:Wt1t2}. This represents the cost evaluated on a randomly oriented square with center at $W^*$. The color axis is cutoff at $J=1000$. 
}
 \label{LossWAnD}
 \end{figure}

To further visualize the loss landscape, we repeated the procedure above in two dimensions by sampling values of $W$ from a random {\it plane} passing through $W^*$. Specifically, we defined the parameterized function
\begin{equation}\label{E:Wt1t2}
W(t_1,t_2)=W^*+\frac{c}{\sqrt N}(t_1 Z_1+t_2Z_2)
\end{equation}
where $Z_1,Z_2\sim N_{N\times N}(0,1)$, $t_1$ and $t_2$ were each varied from $-1$ to $1$ to create the visualization of $J(W(t))$, and $c=2.5$ scales the perturbation (Figure~\ref{LossWAnD}B).  
Note that $W(t_1,t_2)=W^*$ when $t_1=t_2=0$, so the center of the square corresponds to the minimum cost, $J=0$. The approximate stability condition becomes $\sqrt{t_1^2+t_2^2}<\sqrt{1-\sigma_w^2}/{c}=0.346$, so the approximate stability boundary is a circle (Figure~\ref{LossWAnD}B, dashed blue curve). Singularities create intricate ridges of large cost outside of the stability boundary (Figure~\ref{LossWAnD}B). 

In summary, Figure~\ref{LossWAnD} shows that the cost landscape, $J(W)$,  is extremely poorly conditioned outside of the stability region, {\it i.e.}, when $W$ has a spectral radius larger than $1$. Note, however, that the effective cost landscape, $J_A(A)$, of the reparameterized model is a simple quadratic landscape, given by Eq.~\eqref{E:Acost}. Gradient-based learning according to $\Delta W_1$ must traverse the poorly conditioned landscape from Figure~\ref{LossWAnD}. But the learning dynamics of the reparameterized rule, $\Delta W_2$, are equivalent to those produced by $A$ traversing a comparatively well-behaved quadratic landscape. We show in empirical examples below that this difference helps $\Delta W_2$ and its linear approximation, $\Delta W_3$, perform more robustly than $\Delta W_1$. 




\subsubsection{Gradient descent on the recurrent weight matrix in a linear model.}\label{S:linearNDdW1}

We first perform direct gradient descent on $J$ with respect to $W$ using $\Delta W_1$. The gradient-based update rule from Eq.~\eqref{E:DeltaW1c} can be written as
\begin{equation}\label{E:DeltaW1Lin}
\begin{aligned}
\Delta W_1&=\frac{1}{m}\sum_{i=1}^m \Delta W_1^i\\
& = \frac{-2\eta_W}{m}
\left[I-W^{T}\right]^{-1}\left[R-Y\right]R^T.
\end{aligned}
\end{equation}
Empirical simulations show relatively poor learning performance (Figure~\ref{Over3dWComp}A). Learning is slow for small learning rates, but larger learning rates fail to converge to good minima. Recall that the true minimum is zero because the model is over-parameterized. We next show that the linearized approximation to $\Delta W_2$ performs similarly well. 

%



 \begin{figure}
 \centering{
 \includegraphics[width=6in]{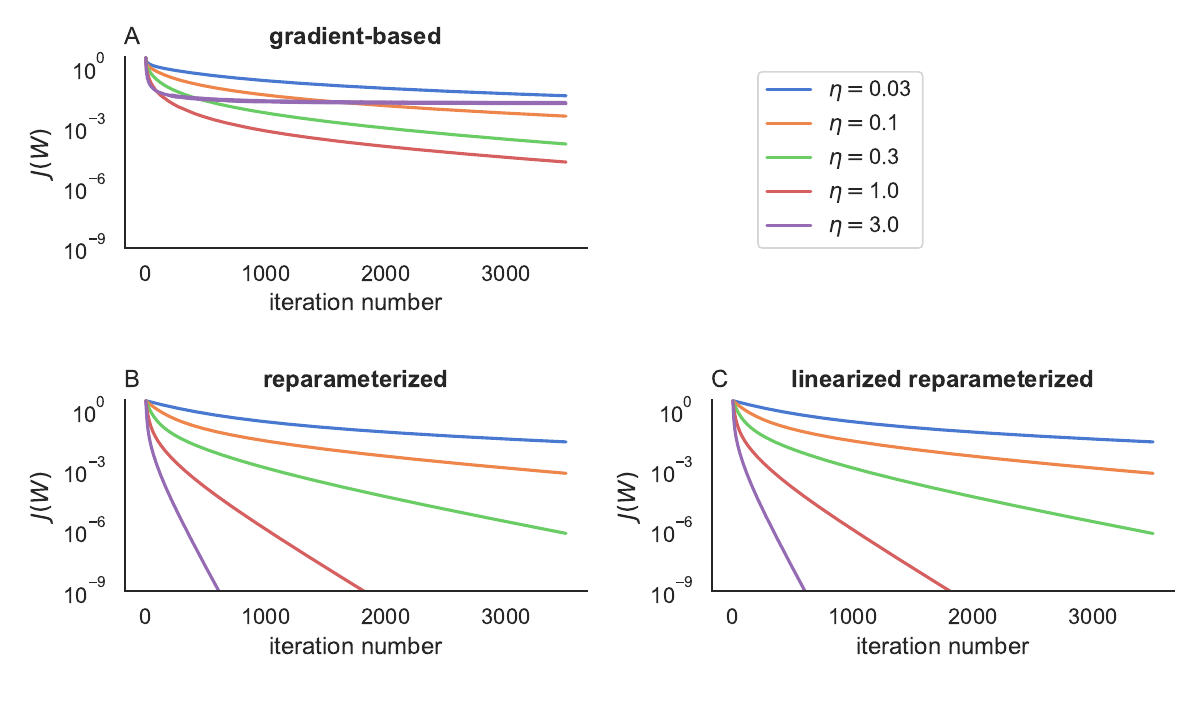}
 }
 \caption{{\bf Performance of three different learning rules for a linear regression problem.} 
 {\bf A)} The cost function, $J(W)$, from Eq.~\eqref{E:Wcost} for five different learning rates, $\eta=\eta_W$, using direct gradient descent on the weight matrix, $\Delta W_1^i$ from Eq.~\eqref{E:DeltaW1Lin}. {\bf B)} Same as A, but for the reparameterized learning rule, $\Delta W_2^i$ from Eq.~\eqref{E:DeltaW2Lin} with $\eta=\eta_A$. {\bf C)} Same as B, but for the linearized rule, $\Delta W_3^i$ from Eq.~\eqref{E:DeltaW3Lin}.
 }
 \label{Over3dWComp}
 \end{figure}

\subsubsection{Learning using the reparameterized rule in a linear model.}\label{S:linearNDdW2}

For this linear example, the reparameterized learning rule from Eqs.~\eqref{E:DeltaW2a} and \eqref{E:DeltaW2b} can be written as
\begin{equation}\label{E:DeltaW2Lin}
\begin{aligned}
\Delta W_2
&=[I-W]-\left(\left[I-W\right]^{-1}-\frac{2\eta_A}{m} (R-Y)X^T\right)^{-1}.
\end{aligned}
\end{equation}
Recall that the learning dynamics produced by Eq.~\eqref{E:DeltaW2Lin} are equivalent to those produced by learning the standard quadratic cost function, $J_A(A)$, 
along with the standard gradient-based update rule,
\begin{equation}\label{E:DeltaALinear2}
\begin{aligned}
\Delta A
&= -\frac{2\eta_A }{m} (AX-Y)X^T,
\end{aligned}
\end{equation}
which is often called the ``delta rule.''

The behavior of the learning dynamics under Eq.~\eqref{E:DeltaALinear2} in the overparameterized case is well understood~\cite{gunasekar2017implicit,soudry2018implicit,gunasekar2018characterizing,zhang2021understanding}. Specifically, $A$ tends toward solutions to $AX=Y$ that minimize the distance, $\|A-A_0\|$, of $A$ from its initial condition under the Frobenius norm. As a result, $\Delta W_2$ finds solutions, $W$, to $[I-W]^{-1}X=Y$ that minimize the distance, $d(W,W_0)$, of $W$ from its initial condition under the metric, $d$, defined in Eq.~\eqref{E:ddef}. In addition, since the Jacobian matrix is given by $\mathcal J=-A^{-1}/\tau$, we may conclude that $\Delta W_2$ finds solutions that minimize the distance, $\left\|\mathcal J_0^{-1}-\mathcal J^{-1}\right\|$, between the inverse Jacobian and its initial condition under the Frobenius norm. 


In comparison to the gradient-based update, $\Delta W_1$,  from Eq.~\eqref{E:DeltaW1Lin} (Figure~\ref{Over3dWComp}A), we see that $\Delta W_2$ from Eq.~\eqref{E:DeltaW2Lin} performs much more robustly (Figure~\ref{Over3dWComp}B). The cost reliably converges toward zero with increasing rates of convergence at larger learning rates.

\subsubsection{Learning using the linearized reparameterized rule in a linear model.}\label{S:linearNDdW3}
The linearized, reparameterized update rule from Eq.~\eqref{E:DeltaW3} for this linear model can be written as
\begin{equation}\label{E:DeltaW3Lin}
\begin{aligned}
\Delta W_3
&= -\frac{2\eta_A}{m}\left[I-W\right] (R-Y)X^T\left[I-W\right].\\
\end{aligned}
\end{equation}
This approximated learning rule gives a simpler equation  that is more efficient to compute, but still shows excellent agreement with the reparameterized rule, $\Delta W_2$ (Figure~\ref{Over3dWComp}C, compare to B).  

Note that  Eq.~\eqref{E:DeltaW3Lin} does not require any explicit computation of matrix inverses with the exception of the computation of firing rates $R=[I-W]^{-1}X$. However, since $R$ is left-multiplied by $[I-W]$ in Eq.~\eqref{E:DeltaW3Lin}, we can get rid of this matrix inverse and write $\Delta W_3$ in the form
\begin{equation}\label{E:DeltaALinearAppImprove}
\begin{aligned}
\Delta W_3 
&= -\frac{2\eta_A}{m}\bigg(XX^T\left[I-W\right]-\left[I-W\right]YX^T\left[I-W\right]\bigg).
\end{aligned}
\end{equation}
Note that the equation $R=[I-W]^{-1}X$ and Eq.~\eqref{E:DeltaALinearAppImprove} are specific to the linear case, $f(z)=z$. When using a nonlinear $f(z)$, fixed point firing rates, $R$,  cannot generally be computed in closed form, but must be approximated by directly simulating Eq.~\eqref{E:drdt} until convergence. 


\subsubsection{Comparing the direction of updates.}

 \begin{figure}
 \centering{
 \includegraphics[width=6in]{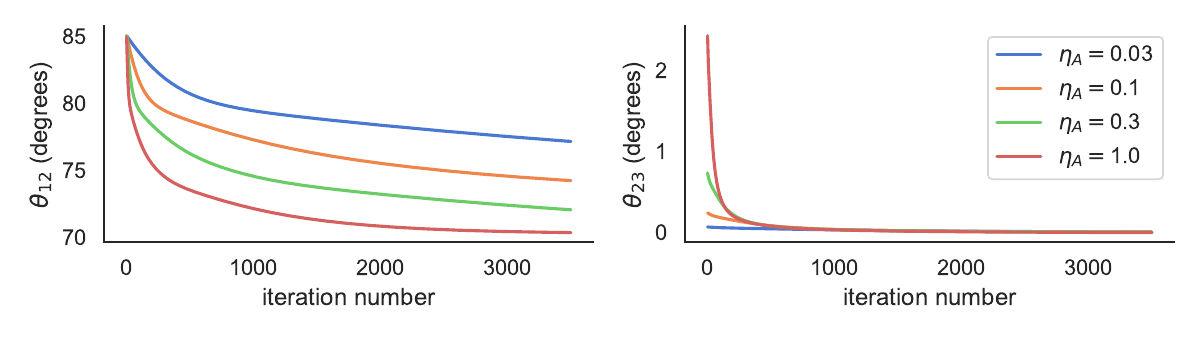}
 }
 \caption{{\bf Angles and correlations between weight updates.} {\bf A)} Angle ($\theta_{12}$) between the weight updates for the gradient-based and reparameterized learning rules. {\bf B)} Same as A, but comparing the parameterized rule with its linearization.   
}
 \label{Angle}
 \end{figure}

To check the similarity between the updates from each learning rule, we calculated the angle between the updates at each iteration, defined by
\begin{equation*}
\begin{aligned}
\theta_{\alpha\beta} 
&= cos^{-1} \left( \frac{\Delta W_{\alpha}\cdot \Delta W_{\beta}}{\sqrt{(\Delta W_{\alpha}\cdot \Delta W_{\alpha})(\Delta W_{\beta}\cdot \Delta W_{\beta})}}\right)\\
\end{aligned}
\end{equation*}
for $\alpha,\beta \in\{1,2,3\}$ where $A\cdot B = \textrm{Tr}(A^TB)$ is the Frobenius inner product. 
For sufficiently small learning rates, any update, $\Delta W$, that decreases the cost must satisfy $\Delta W\cdot \Delta W_1>0$ where $\Delta W_1$ is the gradient-based update~\cite{richards2023study} since the change in cost can be written as
\begin{equation}\label{E:DJLin}
\begin{aligned}
\Delta J&= \Delta W\cdot \nabla_W J+\mathcal O(\eta^2)\\
&=-\frac{\eta_W}{m}\Delta W\cdot \Delta W_1+\mathcal O(\eta^2).
\end{aligned}
\end{equation}
Additionally, $\Delta W_2\to \Delta W_3$ as $\eta_A\to 0$. 
Hence, we should expect that $\theta_{\alpha\beta}<90^\circ$ for all pairs, $\alpha$ and $\beta$. 


Figure~\ref{Angle} shows the angles, $\theta_{12}$ and $\theta_{23}$, during learning.  The angles, $\theta_{13}$, were virtually identical to $\theta_{23}$, so they are not shown.
In each example, we used the same update, $\Delta W_\beta$, to update $W$ throughout learning. Hence, the two updates, $dW_\alpha$ and $dW_\beta$, were compared starting at the same initial $W$  at each learning step.  

The angle, $\theta_{12}$, between the gradient-based updates and the reparameterized updates is relatively close to $90^\circ$ (Figure~\ref{Angle}A), indicating that they point in different directions, nearly as different as possible under the condition that they both decrease the cost. 
Unsurprisingly, $\theta_{23}$ is near zero (Figure~\ref{Angle}B), indicating that the reparameterized rule is similar to its linearization.

\subsection{Training fixed points on a nonlinear categorization task.}\label{S:categorial}

So far, for demonstrative purposes, we considered only simple examples of linear regression in which closed equations for optima are known. We next consider an example of image categorization using the MNIST hand-written digit benchmark.  

The learning goal is to minimize a cross-entropy loss on $C = 10$ classes using  one-hot encoded labels. Specifically,
\begin{equation*}
L(\vvec{y}^i,\vvec{s}^i) = - {\bf y}^i\cdot  \log(\vvec{s}^i)\\
\end{equation*}
where $\vvec{y}^i$ is a ``one-hot" encoded label for digit $i$, 
\begin{equation*}
\vvec{s}_l^i = \frac{e^{\vvec{z}^i_l}}{\sum\limits_{k = 1}^{C}e^{\vvec{z}^i_k}}, \\
\end{equation*}
is the softmax output, and $\vvec{z}^i\in \R^{C}$ is a logit computed from a random projection of fixed point rates of a recurrent network. Specifically,
\[
{\bf z}^i=W_\textrm{out}{\bf r}^i
\]
where $W_\textrm{out}\in \R^{C\times N}$ is a fixed, random readout matrix and ${\bf r}^i=f(W{\bf r}^i+{\bf x}^i)$ is the fixed point from an $N\times N$ recurrent network with input $i$. 
Inputs are flattened $28\times 28$ MNIST images, ${\bf p}^i\in \R^M$, where $M=28*28=784$ and we multiply them by a fixed, random read-in matrix to form the input to the network,
\[
{\bf x}^i=W_\textrm{in}{\bf p}^i
\]
where $W_\textrm{in}\in\R^{N\times M}$ and $N=300$ is the number of neurons in the network. We did not train $W_\textrm{out}$ or $W_\textrm{in}$ because we wanted to focus on the effectiveness of learning the recurrent weight matrix, $W$. We used a hyperbolic target activation function, $f(z)=\tanh(z)$. To compute $\bf r$, we simulated Eq.~\eqref{E:drdt} using a forward Euler scheme for $500$ time steps with $\tau=100dt$ where $dt$ is the step size used in the Euler method. The model was trained on $3$ epochs of the MNIST data set using a batch size of $512$. 

 \begin{figure}
 \centering{
 \includegraphics[width=6in]{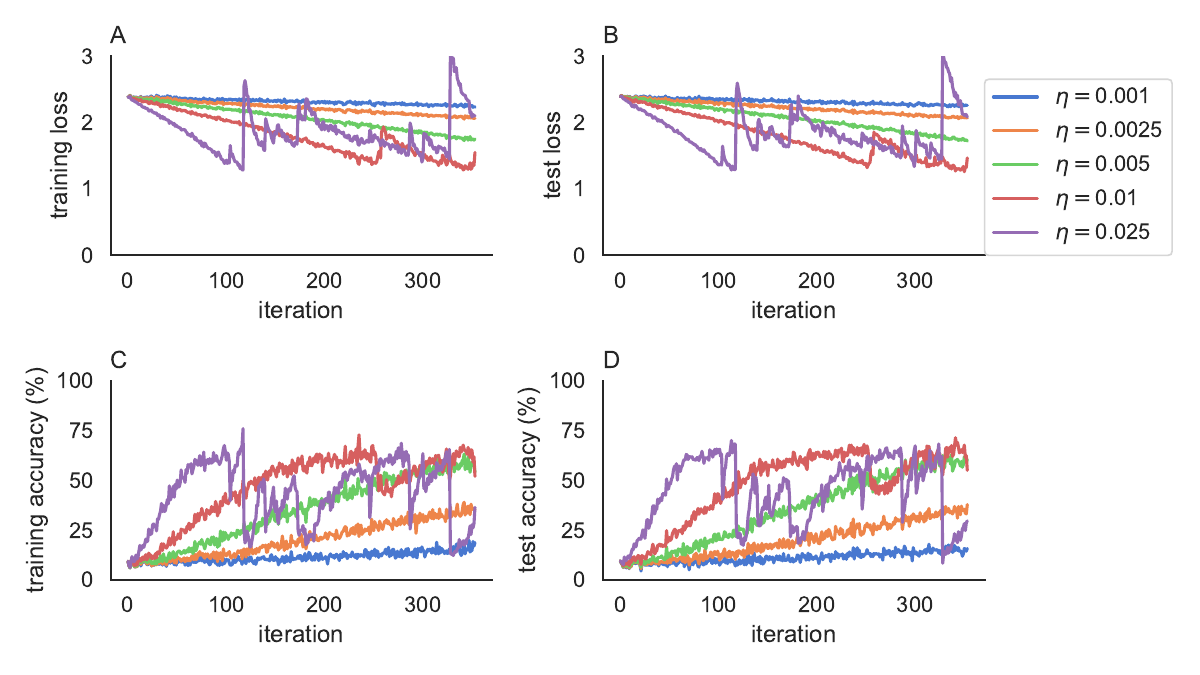}
 }
 \caption{{\bf Gradient based learning on a non-linear classification task.} Results from training the fixed points of a recurrent network to categorize MNIST digits using the gradient-based update rule, $\Delta W_1$.  {\bf A,B,C,D)} Training and testing losses and accuracies evaluated at each step over the course of $3$ epochs. 
}
 \label{MNISTdW1}
 \end{figure}

For this learning task, the gradient-based update rule from Eq.~\eqref{E:DeltaW1c} can be written as
\begin{equation*}
\begin{aligned}
\Delta W_1
&=-\frac{\eta_W}{m}\sum\limits_{i=1}^m\left[\left[G^i\right]^{-1}-W^T\right]^{-1}W_\textrm{out}^T\left[\vvec{s}^i-\vvec{y}^i\right]\left[\vvec{r}^i\right]^T.\\
\end{aligned}
\end{equation*}
We found that this gradient based learning rule performed poorly (Figure~\ref{MNISTdW1}). Small learning rates learned slowly, as expected, while larger learning rates produced instabilities that caused the loss and accuracy to jump erratically during learning.  Indeed, analysis of the Jacobian matrices showed that fixed points became unstable for the two largest learning rates considered in Figure~\ref{MNISTdW1}.

 \begin{figure}
 \centering{
 \includegraphics[width=6in]{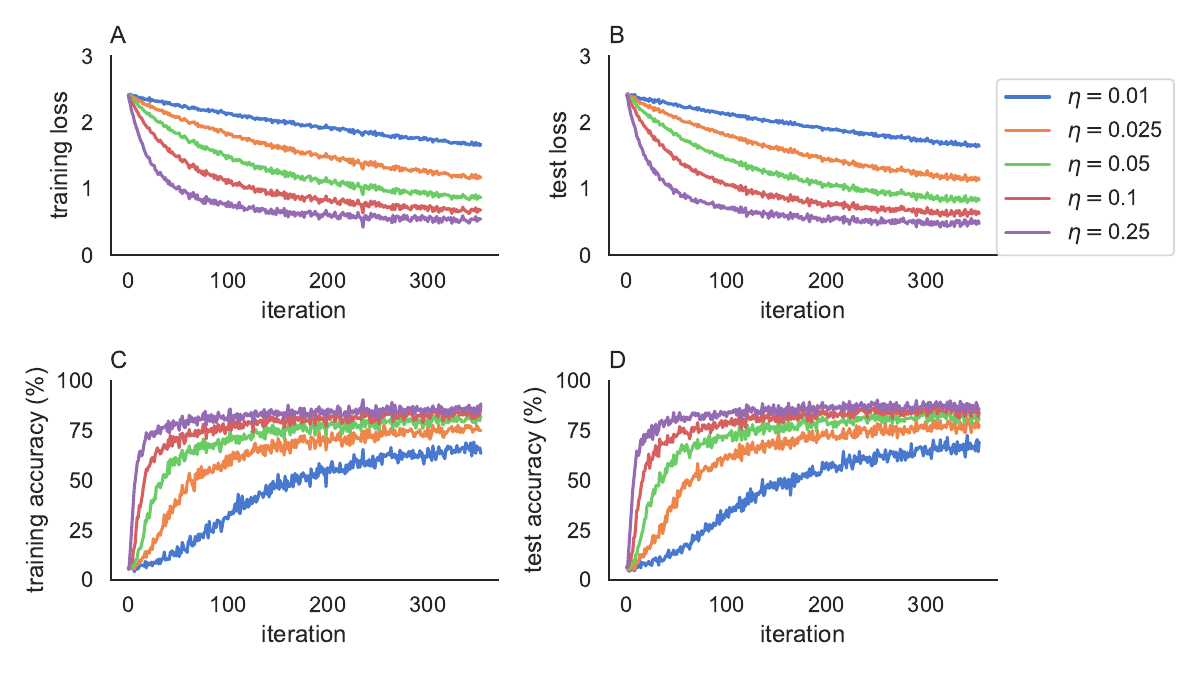}
 }
 \caption{{\bf A reparameterized learning rule on a non-linear classification task.} Results from training the fixed points of a recurrent network to categorize MNIST digits using the gradient-based update rule, $\Delta W_3$.  {\bf A,B,C,D)} Training and testing losses and accuracies evaluated at each step over the course of $3$ epochs. Compare to Figure~\ref{MNISTdW1}. 
}
 \label{MNISTdW3}
 \end{figure}
 
We next tested the linearized, reparameterized update rule, $\Delta W_3$. We did not include results for $\Delta W^2$ because, as in the linear examples considered above, they are very similar to $\Delta W_3$, and they are computationally more expensive to calculate.  For this learning task, $\Delta W_3$ can be written as 
\begin{equation*}
\begin{aligned}
\Delta W_3
&=-\frac{\eta_A}{m}\sum_{i = 1}^m \left[I-WG^i\right]G^i W_\textrm{out}^T\left[\vvec{s}^i-\vvec{y}^i\right][\vvec{r}^i]^T\left[I-G^iW^T\right]\left[I-G^iW\right].
\end{aligned}
\end{equation*}
Using this linearized, reparameterized update rule significantly improved learning performance (Figure~\ref{MNISTdW3}). Learning performance improved consistently with increasing learning rates and higher accuracy was achieved without instabilities. Analysis of the Jacobian matrices showed that fixed points were stable for all of the learning rates considered in Figure~\ref{MNISTdW3}. 
We conclude that the linearized, reparameterized learning rule can improve the learning of fixed points in non-linear recurrent neural network models.

\section{Discussion}

In summary, we have shown that when learning fixed points of recurrent neural network models, the direct application of gradient descent with respect to the recurrent weight matrix under the Euclidean geometry is computationally expensive and not robust. Badly conditioned loss surfaces  can cause ineffective learning. Moreover, matrix inverses in the equations for the gradients are expensive to evaluate or approximate.

We derived two alternative learning rules derived from a reparameterization of the recurrent network model. These learning rules perform more robustly than the standard gradient descent approach. Moreover, one of the two learning rules is  simpler and more computationally efficient. The learning rules can be interpreted as steepest descent and gradient descent on the recurrent weight matrix under a non-Euclidean metric. Our results support recent calls to re-consider the default use of Euclidean gradients on parameters in machine learning~\cite{amari1998natural,amari1998natural2,martens2020new} and computational neuroscience~\cite{surace2020choice,pogodin2023synaptic}.

Recently, authors have argued that the use of Euclidean gradients for modeling learning in the brain is justified because any learning rule that takes small steps and reduces the loss must be positively correlated with the negative Euclidean gradient~\cite{richards2023study}. Put another way, the angle between the parameter updates and the negative Euclidean gradient must be less than $90^\circ$ (see Eq.~\eqref{E:DJLin} and surrounding discussion). While this is true of the learning rules that we studied, the angle is very close to $90^\circ$ in practice, indicating only a weak correlation. Hence, our work shows that the Euclidean gradient is not always strongly correlated with effective learning rules.


We focused on a single, fully connected recurrent layer, which limits the ease with which our model can be applied to larger data sets. Partly for this reason, we only considered the relatively simple MNIST data set as a benchmark. Future work could extend our results to multi-layer recurrent networks in which read-in and read-out matrices are trained and in which at least some fully connected layers are replaced by convolutional connectivity.  These extensions will allow our approach to be applied to larger and more challenging datasets.

Fixed points of recurrent neural networks  are widely used in computational neuroscience to model static neural responses to static stimuli~\cite{ferster2000neural,ozeki2009inhibitory,rubin2015stabilized,ebsch2018imbalanced,curto2019fixed,baker2020nonlinear} and our results could be useful for these modeling approaches. 
On the other hand, recurrent neural networks in machine learning are almost exclusively used for time-varying inputs. Our results rely on the assumption of a time-constant input, ${\bf x}(t)={\bf x}$, which limits their direct application to many machine learning problems. Moreover, even in neuroscience, the assumption of a static stimulus is only an approximation. Natural stimuli are dynamical. However, if fixed points are approached faster than the stimulus changes ({\emph{i.e.}, $\tau$ is faster than ${\bf x}(t)$) then the response, ${\bf r}(t)$, is approximated by the fixed point in Eq.~\eqref{E:FixedPt} and our results provide an approximation. Moreover, a combination of our fixed point learning rules with dynamical learning rules, such as backpropagation through time, could improve learning in situations where some components of the input are static and others are dynamical. Future work should test whether our learning rules can be combined with backpropagation through time to improve performance on tasks with multiple timescales.

\section{Acknowledgments}

This material is based upon work supported by the Air Force Office of Scientific Research (AFOSR) under award number FA9550-21-1-0223 and National Science Foundation (NSF) award numbers DMS-1654268 and DBI-1707400.

%
%

\newpage
\appendix

\section{Appendix}

\subsection{Derivation of the Direct Gradient Descent Update, $\Delta W_1$}\label{A:GradDesW1}

Here, we derive Eq.~\eqref{E:DeltaW1b} for direct gradient descent on $W$. 
 To derive Eq.~\eqref{E:DeltaW1b}, it is sufficient to show that 
\[
\nabla_{W} L=\left( \vvec{r} \left[\nabla_{\vvec{r}} L\right]^T \left[I-GW\right]^{-1}G\right)^T.
\]
Here we are considering a single input, label, and fixed point -- $\vvec x$, $\vvec y$, and $\vvec r$ -- so we can omit the $i$ superscripts that appear in Eq.~\eqref{E:DeltaW1b}. 
Note that $\nabla_{W} L(\vvec{r}(W))$ is a matrix with elements
\begin{equation}\label{LWjk}
\frac{\partial L}{\partial W_{jk}} =  [\nabla_{\vvec{r}} L(\vvec{r}, \vvec{y})] \cdot \frac{\partial \vvec{r}}{\partial W_{jk}}.
\end{equation}
To derive $\frac{\partial \vvec{r}}{\partial W_{jk}}$, we first  derive the change of firing rate, $\Delta \vvec r$, to linear order in an update $\Delta W_{jk}$. Consider an initial $\vvec r_0$ satisfying $\vvec r_0=f(W_0\vvec r_0+\vvec x)$ and an update to $W$ defined by $W=W_0+\Delta W$ for some $\Delta W$. The new fixed point satisfies $\vvec r= f(W\vvec r+\vvec x)$ and we wish to compute $\Delta \vvec r=\vvec r-\vvec r_0$ to linear order in $\Delta W$. Define $\vvec z_0=W_0\vvec r_0+\vvec x$ and $\vvec z=W\vvec r+\vvec x$. Then 
\begin{align*}
\Delta \vvec{r} 
	    & = f(\vvec{z})-f(\vvec{z}_0) \hspace{2cm}\\
	    & = f(\vvec{z}_0)+{f'(\vvec{z}_0)}(\vvec{z}-\vvec{z}_0) - f(\vvec{z}_0) + O(\vvec{z}-\vvec{z}_0)^2\\
	    & = G(\vvec{z}-\vvec{z}_0) + O(\vvec{z}-\vvec{z}_0)^2.\\
\end{align*}
To linear order in $\Delta \vvec{r}$, we have
\begin{align*}
\Delta \vvec{r} & = G(\vvec{z}-\vvec{z}_0) \\
	    		& = G \left((W\vvec{r}+\vvec{x}) - (W_0\vvec{r}_0+\vvec{x}_0)\right) 
	    		\\
	    		& = G \left((W_0+\Delta W)\vvec{r}-W_0\vvec{r}_0\right)\\
	    		& = G (W_0\vvec{r}+\Delta W\vvec{r}-W_0\vvec{r}_0)\\
	   		 & = G (W_0\Delta\vvec{r}+\Delta W\vvec{r})\\
\Delta \vvec{r}-GW_0\Delta\vvec{r} & =  G\Delta W\vvec{r}\\
[I-GW_0]\Delta\vvec{r} & =  G\Delta W\vvec{r}.\\
\end{align*}
As a result, we have that
\[
\frac{\partial \vvec{r}}{\partial W_{jk}} = [I - GW]^{-1}G \vvec{1}^{jk}\vvec{r}
\]
which is interpreted as a column vector. Here, $\vvec{1}^{jk}$ is the matrix with all entries equal to zero except for element $(j,k)$, which is equal to $1$. 
Eq.~\eqref{E:DeltaW1b} then follows from the following Lemma.

\begin{lemma}\label{L:drdw}
\begin{equation}\label{LemmaEq}
[I - GW]^{-1}G \vvec{1}^{jk}\vvec{r}=\vvec{r}_k\left[[I - GW]^{-1}G\right]_{(:,j)}
\end{equation}
where $\vvec{r}_k$ is the $k$th element of $\vvec r$ and $B_{(:,j)}$ denotes the $j$th column of a matrix, $B$. 
\end{lemma}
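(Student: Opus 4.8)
The plan is to collapse the left-hand side to a scalar multiple of a single column of $[I-GW]^{-1}G$ by first simplifying how the rank-one indicator matrix $\vvec{1}^{jk}$ acts on the vector $\vvec r$. First I would compute $\vvec{1}^{jk}\vvec r$ entrywise. Since $(\vvec{1}^{jk})_{ab}=\delta_{aj}\delta_{bk}$, the $a$th entry of the product is $\sum_b \delta_{aj}\delta_{bk}\,\vvec r_b=\delta_{aj}\,\vvec r_k$. Hence $\vvec{1}^{jk}\vvec r$ is the column vector whose only nonzero entry is $\vvec r_k$ in position $j$; equivalently $\vvec{1}^{jk}\vvec r=\vvec r_k\,\vvec e_j$, where $\vvec e_j$ denotes the $j$th standard basis (column) vector. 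This is the one substantive observation in the argument: that $\vvec{1}^{jk}\vvec r$ picks out the $j$th coordinate direction, scaled by the $k$th component of $\vvec r$.

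Next, writing $M:=[I-GW]^{-1}G$, I would invoke linearity of matrix-vector multiplication to factor out the scalar $\vvec r_k$, and observe that $M\vvec e_j$ is by definition the $j$th column of $M$, i.e. $M\vvec e_j=M_{(:,j)}$. Combining the two facts yields
\[
[I-GW]^{-1}G\,\vvec{1}^{jk}\vvec r = M\,(\vvec r_k\,\vvec e_j) = \vvec r_k\,M\vvec e_j = \vvec r_k\,M_{(:,j)} = \vvec r_k\left[[I-GW]^{-1}G\right]_{(:,j)},
\]
which is exactly Eq.~\eqref{LemmaEq}.

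There is essentially no hard step here: the identity is a direct consequence of the rank-one structure of $\vvec{1}^{jk}$, and the only point requiring care is bookkeeping the row/column index convention, so that $\vvec{1}^{jk}\vvec r$ is correctly identified as a scalar multiple of the $j$th basis vector (not the $k$th). I would also note that the invertibility of $I-GW$, guaranteed by hyperbolic stability, plays no role in the algebra itself; it is needed only so that $M$ is well defined, after which the lemma holds for any fixed matrix $M$ in place of $[I-GW]^{-1}G$.
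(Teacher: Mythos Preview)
Your proof is correct and follows essentially the same approach as the paper: both arguments reduce to the observation that $\vvec{1}^{jk}\vvec r=\vvec r_k\,\vvec e_j$, after which left-multiplication by $M=[I-GW]^{-1}G$ extracts the $j$th column. The paper establishes $\vvec{1}^{jk}\vvec r=\vvec r_k\,I_{(:,j)}$ by writing out explicit matrix examples for $(j,k)\in\{(1,1),(1,2),(2,1)\}$ and then appealing to ``WLOG,'' whereas your Kronecker-delta computation is a cleaner, fully general verification of the same fact.
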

\begin{proof}
We first calculate $\vvec{1}^{11}\vvec{r}$, $\vvec{1}^{12}\vvec{r}$, and $\vvec{1}^{21}\vvec{r}$:
\[
\vvec{1}^{11}\vvec{r} = 
\begin{bmatrix}
1 & 0 &\dots &0\\
0 & \cdot  &\dots &\cdot \\
\cdot  & \cdot  &\dots &\cdot \\
\cdot  & \cdot  &\dots &\cdot \\
0 & 0 &\dots &0\\
\end{bmatrix}
\begin{bmatrix}
\vvec{r}_1\\
\cdot\\
\cdot\\
\cdot\\
\vvec{r}_M\\
\end{bmatrix}=
\begin{bmatrix}
\vvec{r}_1\\
0\\
\cdot\\
\cdot\\
0\\
\end{bmatrix}
= \vvec{r}_1 I(:,1)
\]
\[
\vvec{1}^{12}\vvec{r} = 
\begin{bmatrix}
0 & 1 &\dots &0\\
0 & \cdot  &\dots &\cdot \\
\cdot  & \cdot  &\dots &\cdot \\
\cdot  & \cdot  &\dots &\cdot \\
0 & 0 &\dots &0\\
\end{bmatrix}
\begin{bmatrix}
\vvec{r}_1\\
\cdot\\
\cdot\\
\cdot\\
\vvec{r}_M\\
\end{bmatrix}=
\begin{bmatrix}
\vvec{r}_2\\
0\\
\cdot\\
\cdot\\
0\\
\end{bmatrix}
= \vvec{r}_2 I(:,1)
\]
\[
\vvec{1}^{21}\vvec{r} = 
\begin{bmatrix}
0 & 0 &\dots &0\\
1 & \cdot  &\dots &\cdot \\
\cdot  & \cdot  &\dots &\cdot \\
\cdot  & \cdot  &\dots &\cdot \\
0 & 0 &\dots &0\\
\end{bmatrix}
\begin{bmatrix}
\vvec{r}_1\\
\cdot\\
\cdot\\
\cdot\\
\vvec{r}_M\\
\end{bmatrix}=
\begin{bmatrix}
0\\
\vvec{r}_1\\
\cdot\\
\cdot\\
0\\
\end{bmatrix}
= \vvec{r}_1 I(:,2).
\]
Denote $A: = [I - GW]^{-1}G$ , so $A \vvec{1}^{11}\vvec{r}=\vvec{r}_1A_{(:,1)}$, $A \vvec{1}^{12}\vvec{r}=\vvec{r}_2A_{(:,1)}$, and $A \vvec{1}^{21}\vvec{r}=\vvec{r}_1A_{(:,2)}$. Notice that they are column vectors. WLOG, $A \vvec{1}^{jk}\vvec{r}=\vvec{r}_kA_{(:,j)}$
\[
\begin{aligned}
&LHS  = \nabla_{w} L(\vvec{r}(W)) = 
 				\begin{bmatrix}
				\frac{dL}{dW_{11}} & \frac{dL}{dW_{12}} & \dots & \dots & \frac{dL}{dW_{1M}}\\
				\frac{dL}{dW_{21}} & \frac{dL}{dW_{22}} & \dots & \dots & \frac{dL}{dW_{2M}}\\
				\frac{dL}{dW_{j1}}&\dots & \frac{dL}{dW_{jk}} & \dots & \frac{dL}{dW_{jM}}\\
				\frac{dL}{dW_{M1}} & \frac{dL}{dW_{M2}} & \dots & \dots & \frac{dL}{dW_{MM}}\\
 				\end{bmatrix}\\
 &= 
 			\begin{bmatrix}
			\vvec{r}_1[\nabla_{\vvec{r}} L(\vvec{r})]\cdot A_{(:,1)} & \vvec{r}_2[\nabla_{\text{r}} L(\vvec{r})]\cdot A_{(:,1)} & \dots &\vvec{r}_M[\nabla_{\vvec{r}} L(\vvec{r})]\cdot A_{(:,1)}\\
			\vvec{r}_1[\nabla_{\vvec{r}} L(\vvec{r})]\cdot A_{(:,2)} & \vvec{r}_2[\nabla_{\vvec{r}} L(\vvec{r})]\cdot A_{(:,2)} & \dots &\vvec{r}_M[\nabla_\vvec{r} L(\vvec{r})]\cdot A_{(:,2)}\\
			\dots & \dots & \vvec{r}_k [\nabla_{\vvec{r}} L(\vvec{r})]\cdot A_{(:,j)} & \dots\\
			\vvec{r}_1[\nabla_{\vvec{r}} L(\vvec{r})]\cdot A_{(:,M)} & \vvec{r}_2[\nabla_{\vvec{r}} L(\vvec{r})]\cdot A_{(:,M)} & \dots &\vvec{r}_M[\nabla_\vvec{r} L(\vvec{r})]\cdot A_{(:,M)}\\
			\end{bmatrix},
 \end{aligned}
 \]
 \[
 \begin{aligned}
 &RHS = \left( \vvec{r} [\nabla_{\vvec{r}} L(\vvec{r})]^T [I - GW]^{-1}G \right)^T =  \left(\vvec{r} [\nabla_{\vvec{r}} L(\vvec{r})]^TA\right)^T\\
 \\[4pt]
 &=\left(
  			\begin{bmatrix}
			\vvec{r}_1\frac{\partial L(\vvec{r})}{\partial \vvec{r}_1} & \vvec{r}_1\frac{\partial L(\vvec{r})}{\partial \vvec{r}_2} & \dots &\vvec{r}_1\frac{\partial L(\vvec{r})}{\partial \vvec{r}_M}\\
			\vvec{r}_2\frac{\partial L(\vvec{r})}{\partial \vvec{r}_1} & \vvec{r}_2\frac{\partial L(\vvec{r})}{\partial \vvec{r}_2} & \dots &\vvec{r}_2\frac{\partial L(\vvec{r})}{\partial \vvec{r}_M}\\
			\dots & \dots & \dots & \dots\\
			\vvec{r}_M\frac{\partial L(\vvec{r})}{\partial \vvec{r}_1} & \vvec{r}_M\frac{\partial L(\vvec{r})}{\partial \vvec{r}_2} & \dots &\vvec{r}_M\frac{\partial L(\vvec{r})}{\partial \vvec{r}_M}\\
			\end{bmatrix}
			\begin{bmatrix}
			A_{11} & A_{12}&\dots & A_{1M}\\
			A_{21} & A_{22}&\dots& A_{2M}\\
			\dots &\dots &\dots & \dots\\
			A_{M1} & A_{M2}&\dots & A_{MM}\\
			\end{bmatrix}
			\right)^T\\
\\[4pt]
&=
 			\begin{bmatrix}
			\vvec{r}_1[\nabla_{\vvec{r}} L(\vvec{r})]^T A_{(:,1)} & \vvec{r}_1[\nabla_{\vvec{r}} L(\vvec{r})]^T A_{(:,2)} & \dots & \vvec{r}_1[\nabla_{\vvec{r}} L(\vvec{r})]^T A_{(:,M)}\\
			 \vvec{r}_2[\nabla_{\vvec{r}} L(\vvec{r})]^T A_{(:,1)} & \vvec{r}_2[\nabla_{\vvec{r}} L(\vvec{r})]^T A_{(:,2)} & \dots &\vvec{r}_2[\nabla_{\vvec{r}} L(\vvec{r})]^T A_{(:,M)}\\
			\dots & \dots & \dots & \dots\\
			 \vvec{r}_M[\nabla_{\vvec{r}} L(\vvec{r})]^T A_{(:,1)} & \vvec{r}_M[\nabla_{\vvec{r}} L(\vvec{r})]^T A_{(:,2)} & \dots &\vvec{r}_M[\nabla_{\vvec{r}} L(\vvec{r})]^T A_{(:,M)}\\
			\end{bmatrix}^T\\ 
\\[4pt]
&=
 			\begin{bmatrix}
			\vvec{r}_1[\nabla_{\vvec{r}} L(\vvec{r})]\cdot A_{(:,1)} & \vvec{r}_2[\nabla_{\vvec{r}} L(\vvec{r})]\cdot A_{(:,1)} & \dots &\vvec{r}_M[\nabla_{\vvec{r}} L(\vvec{r})]\cdot A_{(:,1)}\\
			\vvec{r}_1[\nabla_{\vvec{r}} L(\vvec{r})]\cdot A_{(:,2)} & \vvec{r}_2[\nabla_{\vvec{r}} L(\vvec{r})]\cdot A_{(:,2)} & \dots &\vvec{r}_M[\nabla_{\vvec{r}} L(\vvec{r})]\cdot A_{(:,2)}\\
			\dots & \dots & \vvec{r}_k[\nabla_{\vvec{r}} L(\vvec{r})]\cdot A_{(:,j)}  & \dots\\
			\vvec{r}_1[\nabla_{\vvec{r}} L(\vvec{r})]\cdot A_{(:,M)} & \vvec{r}_2[\nabla_{\vvec{r}} L(\vvec{r})]\cdot A_{(:,M)} & \dots &r_M[\nabla_{\vvec{r}} L(\vvec{r})]\cdot A_{(:,M)}\\
			\end{bmatrix}\\
&=LHS.
 \end{aligned}
 \]
\end{proof}
Combining Eq.~\eqref{LWjk} with Eq.~\eqref{LemmaEq} gives
\[
 \begin{aligned}
\nabla_{W} L &=\left( \vvec{r}\left[\nabla_{\vvec{r}} L(\vvec{r})\right]^T \left[I-GW\right]^{-1}G\right)^T\\
 \end{aligned}
\]
which can be simplified to get Eq.~\eqref{E:DeltaW1b} for $\Delta W_1$.

\subsection{Analysis of a natural reparameterization and its linear approximation}\label{A:oldA}
We now consider the updates given by the reparameterization $A = \left[G^{-1}-W\right]^{-1}$. The direct reparameterized update, $\Delta W_2$ in this case is given by 
\[
\begin{aligned}
\Delta W_2 & = -\left[\left(A-\eta_A(\nabla_{\vvec{r}}L)(\vvec{x})^{T}\right)^{-1}- A^{-1}\right]\\
	      & = -\left[\left(\left[G^{-1}-W\right]^{-1}-\eta_A(\nabla_{\vvec{r}}L)( \vvec{r})^{T}\left[G^{-1}-W^T\right]\right)^{-1}-G^{-1}-W\right].
\end{aligned}
\]

\begin{proof}
Since $A= \left[G^{-1} - W\right]^{-1}$,  we have $W = G^{-1}-A^{-1}$. Let $W^0$ and $A^0$ represent previous step update before $W$ and $A$, then 
\[
\begin{aligned}
\Delta W & = W-W^0 \\
	      & = G^{-1}-A^{-1}-\left(\left[G^0\right]^{-1}-[A^0]^{-1}\right)\\
	      & = (G^{-1}-\left[G^0\right]^{-1})-A^{-1}+ [A^0]^{-1} \\
	      & = -\left(A^0+\Delta A\right)^{-1}+ [A^0]^{-1} \\
	      & = -\left(A^0-\eta_A\left(\nabla_{\vvec{r}}L\right)\left(\vvec{x}\right)^{T}\right)^{-1}+ [A^0]^{-1} \\
	      & = -\left(A^0-\eta_A\left(\nabla_{\vvec{r}}L\right)\left(\vvec{r}\right)^{T}[A^0]^{-T}\right)^{-1}+ [A^0]^{-1}. \\
\end{aligned}
\]
To get the expression that has only $G$ and $W$, we can substitute $A=[G^{-1}-W]^{-1}$ and $A^{-1} = G^{-1}-W$, and use $G = G^T$ and $G^{-T} = G^{-1}$ since $G$ is a diagonal matrix. This gives
\[
\begin{aligned}
\Delta W_2 & = -\left(A-\eta_A\left(\nabla_{\vvec{r}}L\right)\left(\vvec{r}\right)^{T}A^{-T}\right)^{-1}+ A^{-1} \\
		& =  -\left(\left[G^{-1}-W\right]^{-1}-\eta_A\left(\nabla_{\vvec{r}}L\right)\left(\vvec{r}\right)^{T}\left[G^{-1}-W^{T}\right]\right)^{-1}+ \left[G^{-1}-W\right]. 
\end{aligned}
\]

\end{proof}
Note that as $G_{jj} \to 0$, $A_{jj}^{-1}= [G_{jj}^{i}]^{-1}-W_{jj} \to \infty$, so this reparameterizatin is poorly behaved in situations where $G_{jj}=f'({\bf z}_j)$ becomes small or zero because the second term in the sum diverges while the first term does not. 

We also show that linearizing this parameterization around $\eta_A=0$ still leads to updates that diverge when elements of $G$ become small. 
Following the linearization from Section~\ref{S:dW3}, the linearized, reparameterized update is given by
\[
\begin{aligned}
\Delta W_3 & = -\eta_A A^{-1}(\nabla_{\vvec{r}}L)(\vvec{x})^{T}A^{-1}\\
               & = -\eta_A\left[G^{-1}-W\right](\nabla_{\vvec{r}}L)(\vvec{r})^{T}\left[G^{-1}-W^{T}\right]\left[G^{-1}-W\right].\\
\end{aligned}
\]
\begin{proof} 
First note that $\left.\Delta W_2\right|_{\eta_A=0}=0$, so we have to linear order in $\eta_A$,
\begin{equation}\label{LinW2a}
\Delta W_2 = \left.\frac{d\Delta W_2}{d\eta_A}\right|_{\eta_A=0}\eta_A+\mathcal O(\eta_A^2)
\end{equation}
Now let 
\[
V = A+\Delta A= A-\eta_A(\nabla_{\vvec{r}}L)\left(A^{-1}\vvec{r}\right)^T
\] 
then $\Delta W_2 = A^{-1}-V^{-1}$ so
\[
\begin{aligned}
\frac{d\Delta W_2}{d\eta_A} &= \frac{dA^{-1}}{d\eta_A}-\frac{dV^{-1}}{d\eta_A}\\
&=V^{-1}\frac{dV}{d\eta_A}V^{-1}
\end{aligned}
\]
since $dA^{-1}/d\eta_A=0$. Combining this with Eq.~\eqref{LinW2a} and the definition of $V$ gives the linearized update,
\[
\begin{aligned}
\Delta W_3 &= \left.V^{-1}\frac{dV}{d\eta_A}V^{-1}\right|_{\eta_A=0}\eta_A\\
& =  V^{-1}\left(-\left(\nabla_{\vvec{r}}L\right)\left(A^{-1} \vvec{r}\right)^{T}\right)V^{-1}\Biggr\rvert_{\eta_A = 0} \eta_A\\
& = -A^{-1}(\nabla_{\vvec{r}}L)\left(\vvec{r}\right)^{T}A^{-T}A^{-1}\eta_A\\
& = -\left[G^{-1}-W\right]^{-1}(\nabla_{\vvec{r}}L)\left(\vvec{r}\right)^{T}\left[G^{-1}-W^T\right]\left[G^{-1}-W\right]\eta_A.
\end{aligned}
\]

Again, substitute $A^{-1}=G^{-1}-W$, to get the final expression. Notice that $\Delta W_3 = A^{-1}A^{-T}\Delta W_1 A^{-T}A^{-1}$ so one can let $B = A^{-1}A^{-T}$ and $C = A^{-T}A^{-1}$, which are symmetric, and  $\Delta W_3  = B\Delta W_1 C$.
\end{proof}
Note, again, that $\Delta W_3$ diverges if elements of $G$ go to zero. Therefore, the natural reparameterization $A=[G^{-1}-W]^{-1}$ is not well suited for learning. 

\subsection{Linearization of the corrected reparameterization}\label{A:LinearApprox}

Here, we derive the linearized update, $\Delta W_3$, given in Eq.~\eqref{E:DeltaW3}. This update rule is derived by expanding $\Delta W_2$ from Eq.~\eqref{E:DeltaW2b} to linear order. Recall that $\Delta W_2$ was derived from the reparameterization $A = [G-GWG]^{-1}$.  
Let $U = [G^{-1} - W]^{-1}$, then we can rewrite Eq.~\eqref{E:DeltaW2b} as
\[
\Delta W_2 = -\left[U-\eta_AG^2\left(\nabla_{\vvec{r}}L\right)(\vvec{r})^T\left[I-GW\right]^TG\right]^{-1}+U^{-1}.\\
\]
Now, denote everything inside of the inverse as $V$ so 
\[V = U-\eta_AG^2\left(\nabla_{\vvec{r}}L\right)(\vvec{r})^T\left[I-GW\right]^TG.
\]
Then Eq.~\ref{E:DeltaW2b} can be further rewritten as 
\[
\Delta W_2 = U^{-1}-V^{-1}.\\
\]
Now, following the same approach as Appendix~\ref{A:oldA}, note that $\left.\Delta W_2\right|_{\eta_A=0}=0$, so the linearization of $\Delta W_2$ around $\eta_A=0$ is given by
\[
\begin{aligned}
\Delta W_3&=\left.\frac{d\Delta W_2}{d\eta_A}\right|_{\eta_A=0}\eta_A\\
&=\left[\frac{d U^{-1}}{d \eta_A}-(-V^{-1}\frac{d V}{d \eta_A}V^{-1}) \right]_{\eta_A = 0}\eta_A\\
& =  V^{-1}\left(-G^2\left(\nabla_{\vvec{r}}L\right)(\vvec{r})^T\left[I-GW\right]^TG\right)V^{-1}\Biggr\rvert_{\eta_A = 0} \eta_A\\
& =  U^{-1}\left(-G^2\left(\nabla_{\vvec{r}}L\right)(\vvec{r})^T\left[I-GW\right]^TG\right)U^{-1} \eta_A\\
&=-\eta_A [G^{-1}-W]G^2\left(\nabla_{\vvec{r}}L\right)(\vvec{r})^T\left[I-GW\right]^TG[G^{-1}-W]\\
&=-\eta_A\left[I-W G\right]G\left(\nabla_{\vvec{r}}L\right)(\vvec{r})^T\left[I - G W\right]^T[I-GW].
\end{aligned}
\]

\end{document}